\newtheorem{definition}{Definition}
\title{Extensions of the Maximum Bichromatic Separating Rectangle Problem}
\author{Bogdan Armaselu
	\thanks{Work started as a student of Department of Computer Science, University of Texas at Dallas, {\tt barmaselub@gmail.com}}
}
\begin{document}
\thispagestyle{empty}
\maketitle

\begin{abstract}
In this paper, we study two extensions of the maximum bichromatic separating rectangle (MBSR) problem introduced in \cite{Armaselu-CCCG, Armaselu-arXiv}.
One of the extensions, introduced in \cite{Armaselu-FWCG}, is called \textit{MBSR with outliers} or MBSR-O, 
and is a more general version of the MBSR problem in which the optimal rectangle is allowed to contain up to $k$ outliers, where $k$ is given as part of the input.
For MBSR-O, we improve the previous known running time bounds of $O(k^7 m \log m + n)$ to $O(k^3 m  + m \log m + n)$.
The other extension is called \textit{MBSR among circles} or MBSR-C and asks for the largest axis-aligned rectangle separating red points from blue unit circles.
For MBSR-C, we provide an algorithm that runs in $O(m^2 + n)$ time.
\end{abstract}

\textbf{Keywords} extensions $\cdot$ bichromatic $\cdot$ separating rectangle $\cdot$ outliers $\cdot$ circles

\section{Introduction}

In this paper, we consider two extensions of the Maximum Bichromatic Separating Rectangle (MBSR) problem.

The MBSR problem, introduced in \cite{Armaselu-CCCG} (see also \cite{Armaselu-arXiv}), is stated as follows.
Given a set of red points $R$ and a set of blue points $B$ in the plane, with $|R| = n, |B| = m$, compute the axis-aligned rectangle $S$ having the following properties:

(1) $S$ contains all points in $R$,

(2) $S$ contains the fewest points in $B$ among all rectangles satisfying (1),

(3) $S$ has the largest area of all rectangles satisfying (1) and (2).

We call such rectangle a \textit{maximum bichromatic separating rectangle} (MBSR) or simply \textit{largest separating rectangle}.

Let $S_{min}$ be the smallest axis-aligned rectangle enclosing $R$ and discard the blue points inside $S_{min}$, as they cannot be avoided. 

The first extension of MBSR, called \textit{MBSR with outliers} (MBSR-O) or simply \textit{outliers version}, was introduced in \cite{Armaselu-FWCG},
and asks for the largest axis-aligned rectangle containing all red points and up to $k$ blue points outside $S_{min}$, where $k$ is given as part of the input.
That is, MBSR with outliers is a relaxation of condition (2) of the original MBSR problem.

In this paper, we introduce another extension of MBSR, called \textit{MBSR among circles} (MBSR-C) or simply \textit{circles version}, 
where blue points are replaced by blue unit circles, and the goal is to find the largest rectangle containing all red points and no point of any blue circle outside $S_{min}$.
Here we may also discard blue circles intersecting $S_{min}$ from consideration, as they cannot be avoided.

For both extensions, we assume that all points are in general position and that an optimal bounded solution exists, that is, there are no unbounded solutions.

The circles version is motivated by problems involving "imprecise" data, such as tumor extraction with large or imprecise cells as red or blue points,
or machine learning applications with probabilistic, rather than deterministic training data.

The outliers version can have applications in various domains.
For instance, in VLSI or circuit design, the goal is to place a hardware component (e.g., cooler) on a board with minor fabrication defects (blue points), where up to $k$ defects are acceptable to be covered by the component.
Here the red points may denote "hot spots" that must be covered or isolated from the rest of the board.

Other applications of MBSR extensions can be in machine learning, data science, or spatial databases.

\subsection{Related Work}
\label{Related Work}

Geometric separability of point sets, which deals with finding a geometric locus that separates two or more point sets whilst achieving a specific optimum criterion, is an important topic in computational geometry.
Various approaches deal with finding a specifc type of separator (e.g., hyperplane) when the points are guaranteed to be separable.
However, this is not always the case, and there is related work on weak separability, i.e., either allowing a fixed number of misclassifications or minimizing them.

Bitner and Daescu et. al \cite{Bitner} study the problem of finding the smallest circle that separates red and blue points (i.e., contains all red points and the fewest blue points). 
They provide two algorithms. The first of them runs in $O(m n \log m + n \log n)$ time and the second runs in $O(m^{1.5}\log^{O(1)}n + n \log n)$ time.
Both algorithms enumerate all optimal solutions.
Later, Armaselu and Daescu \cite{Armaselu-TCS} addressed the dynamic version of the problem, in which blue points may be inserted or removed dynamically, and provided three data structures.
The first one supports insertion and deletion queries in $O(n \log m)$ time, and can be updated in $O((m+n) \log m)$ time. 
The other two are insertion-specific (resp., deletion-specific) and allow poly-logarithmic query time, at the expense of $O(m n \log (m n))$ update time.

The problem of computing an MBSR was considered by Armaselu and Daescu.
For the case when the target rectangle has to be axis-aligned, the algorithm runs in $O(m \log m + n)$ time \cite{Armaselu-CCCG, Armaselu-arXiv}.
When the target rectangle is allowed to be arbitrarily oriented, an $O(m^3 + n \log n)$ time algorithm is given,
and they also provide an algorithm to find the largest separating box in 3D in $O(m^2 (m + n))$ time \cite{Armaselu-CCCG}.

Separability of imprecise points, where points are asscoiated with an imprecision region, has also been considered.
Note that, for the problem addressed in this paper, the blue circles can be thought of as imprecision regions.
When the imprecision regions are axis-aligned rectangles, de Berg et. al \cite{deBerg} come up with algorithms to find certain separators (with probability 1) and possible separators.
For certain separators, their algorithm runs in linear time, while for possible separators, the running time is $O(n \log n)$.

Armaselu, Daescu, Fan, and Raichel considered extensions of the MBSR problem. 
Specifically, they give an approach to find a largest rectangle separating red points from blue axis aligned rectangles in $O(m \log m + n)$ time, 
as well as an approach for the largest rectangle separaing red points from blue points with $k$ outliers in $O(k^7 m \log m + n)$ time \cite{Armaselu-FWCG}.

A very popular related problem is the one of computing the largest empty (axis-aligned) rectangle problem. 
Given a set of planar points $P$, the goal is to compute the largest $P$-empty (axis-aligned) rectangle that has a point $p \in P$ on each of its sides.
For the axis-aligned version, the best currently known bound for computing \textit{one} optimal solution is $O(n \log^2 n)$ time by Aggarwal et. al \cite{Aggarwal}.
Mukhopadhyay et. al \cite{Mukhopadhyay} solve the version where the rectangle can be arbitrarily oriented. 
They provide an $O(n^3)$ time algorithm that lists all optimal solutions.
Chaudhuri et. al \cite{Chaudhuri} prove that there can be $\Omega(n^3)$ optimal solutions in the worst case.

Nandy et. al considered the problem of finding the maximal empty axis-aligned rectangle among a given set of rectangles isothetic to a given bounding rectangle \cite{Nandy90}.
They show how to solve the problem in $O(n \log n + R)$ time, where $R$ is the number of rectangles. 
Later, they solved the version where obstacles have arbitrary orientation using an algorithm that takes $O(n \log^2 n)$ \cite{Nandy94}.

\subsection{Our Results}
\label{Our Results}

We first improve the result in \cite{Armaselu-FWCG} for the outlier version. 
Specifically, we first give a slight improvement that runs in $O(k^7 m + m \log m + n)$ time for $k$ outliers, and then a further improvement to $O(k^3 m + k m \log m + n)$ time
(which works when $k > (\log m)^{\frac{1}{4}}$).
We also solve the circles version and provide an algorithm that runs in $O(m^2 + n)$ time.

The rest of the paper is structured as follows.
In Section 2, we describe our improvements to MBSR-O, then in Section 3 we describe our algorithm for MBSR-C.
Finally, in Section 4 we draw the conclusions and list the future directions.

\section{Finding the Largest Separating Rectangle with $k$ Outliers}
\label{MBSR-O}

Given an integer $k \geq 0$, the goal is to find the largest axis-aligned rectangle enclosing $R$ that contains at most $k$ blue points in $B$.
We call this the maximum bichromatic separating rectangle with $k$ outliers (MBSR-O).

The approach in \cite{Armaselu-FWCG} is, in a nutshell, as follows. 
First, compute the smallest $R$-enclosing rectangle $S_{min}$ in $O(n)$ time. The lines defining $S_{min}$ partiton the plane into 8 regions outside $S_{min}$.
Namely, the four "side" regions $E, N, W, S$ and the four corner regions (quadrants) $NE, NW, SW, SE$.
For each region $q$, denote by $B_q$ the set of blue points inside $q$.

\begin{definition}\cite{Armaselu-FWCG}
A point $p \in B_{NE}$ \textit{dominates} another point $q \in B_{NE}$, if $x(p) > x(q)$ and $y(p) > y(q)$.
\end{definition}

\begin{definition}\cite{Armaselu-FWCG}
For each $B_q$ and for any $t$ such that $0 \leq t \leq k$, the \textit{$t$-th level staircase of $B_q$} is the rectilinear polygon formed by the blue points in $B_q$ that dominate exactly $t$ blue points in $B_q$.
\end{definition}

Note that an optimal solution contains $t$ points from $B_q$ if and only if it is sbounded by the $t$-th level staircase of $B_q$.
See figure \ref{fig:staircase} for an illustration of a staircase.

\begin{figure}[htp]
	\centering
	\includegraphics[scale=0.2]{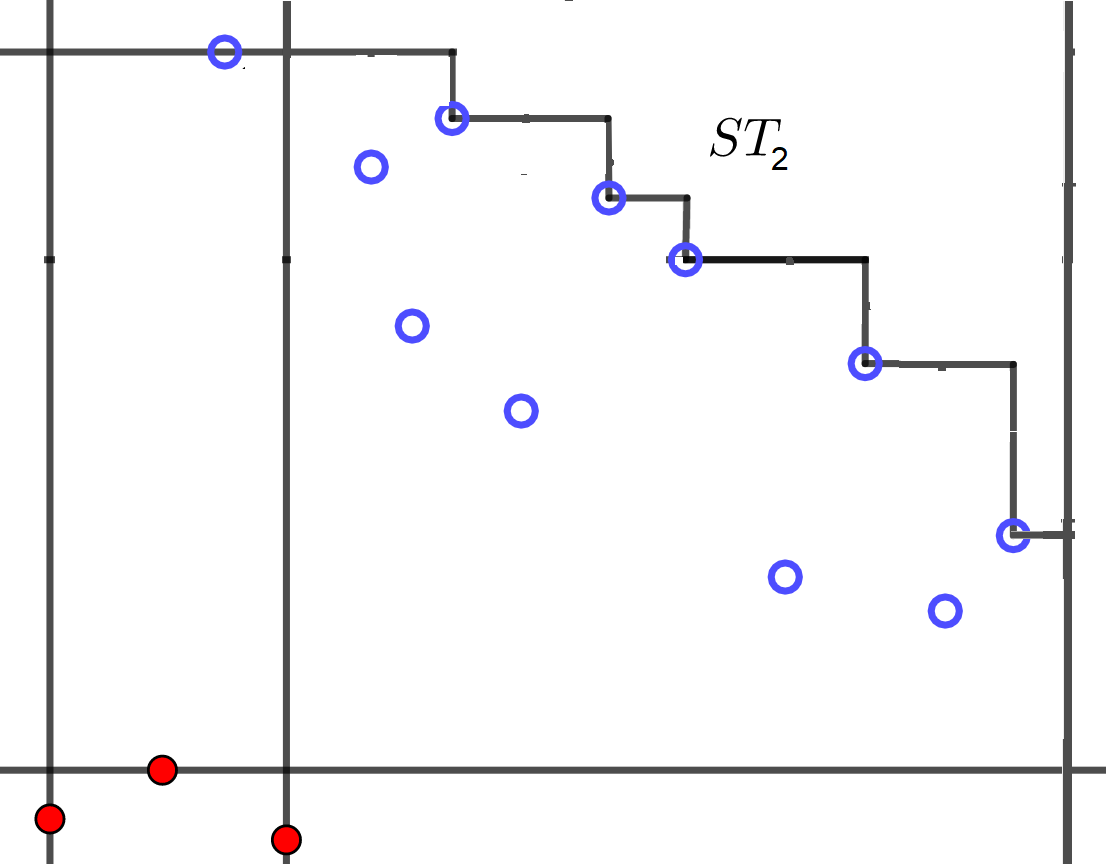}
	\caption{The 2nd level staircase of $B_{NE}$}
	\label{fig:staircase}
\end{figure}

For each partition of the number $k$ into 8 smaller integers (each corresponding to a region) $k = k_E + k_{NE} + \dots + k_{SE}$, do the following.

1. consider the $k_q + 1$-th closest to $S_{min}$ blue point from each side region $q$. These points support a rectangle $S_{max}$ which has to contain the target rectangle.

2. compute the $k_q$-level staircase $ST_{k_q}(q)$ of each quadrant $q$ in $O(m \log m)$ time.

3. Solve a "staircase" problem, i.e., find the largest rectangle containing $S_{min}$, included in $S_{max}$ and being supported by points of the staircases. 
This is done in $O(m)$ time.

There are $O(k^7)$ such partitions of $k$, so the running time of $O(k^7 m \log m + n)$ follows.

\subsection{A Slight Improvement on the Running Time}
\label{MBSR-O-improved}

To reduce the running time, we first prove the following lemma.

\begin{lemma}
\label{lemma:1.1}
The $t$-level staircases $ST_t(q)$ can be computed in $O(m \log m + m k)$ time for all $t \leq k$ and for all quadrants $q$.
\end{lemma}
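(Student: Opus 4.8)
The plan is to treat the four quadrants independently and, within each, reduce to the $NE$ case by reflecting coordinates so that \emph{dominates} coincides with the standard up-and-to-the-right partial order; the whole computation is then run once per quadrant, contributing only a constant factor. Fix the $NE$ quadrant and, for each point $p \in B_{NE}$, write $d(p)$ for the number of points of $B_{NE}$ that $p$ dominates, so that by definition $ST_t(NE) = \{p : d(p) = t\}$. The first thing I would establish is a structural claim: for a fixed $t$, the points with $d(p)=t$ form an antichain that is monotone decreasing, i.e.\ a genuine staircase. Indeed, if $d(p_1)=d(p_2)$ and $x(p_1) < x(p_2)$ but $y(p_1) < y(p_2)$, then $p_2$ dominates $p_1$ as well as everything $p_1$ dominates, forcing $d(p_2) \ge d(p_1)+1$, a contradiction; hence $x(p_1)<x(p_2)$ implies $y(p_1)>y(p_2)$, which is exactly the staircase property. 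The same observation shows the sets $ST_t$ are pairwise disjoint, and since each point has a single $d$-value we get $\sum_{t=0}^{k} |ST_t(NE)| \le m$, so the total size of all staircases we must output is $O(m)$.

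The algorithm I would use is a single left-to-right plane sweep. First sort $B_{NE}$ by $x$-coordinate in $O(m\log m)$ time. The key fact that makes one pass suffice is a monotonicity of levels under right-insertion: when points are processed in increasing $x$-order, a newly inserted point $p$ has the largest $x$ seen so far, so no earlier point can lie in the lower-left quadrant of $p$ and therefore no earlier point is dominated differently because of $p$. Consequently, inserting $p$ never changes the $d$-value of any previously inserted point, and $d(p)$ equals simply the number of already-inserted points whose $y$-coordinate is smaller than $y(p)$. Thus each $d(p)$ is obtained as a single rank query against the set of $y$-values inserted so far, which I maintain in a Fenwick tree (or balanced search tree) over the $m$ distinct $y$-coordinates at $O(\log m)$ per query and update.

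Once $d(p)$ is known, I assign $p$ to bucket $ST_{d(p)}$ whenever $d(p) \le k$ and discard it otherwise. Because the points are consumed in increasing $x$-order and, within a level, larger $x$ forces smaller $y$ by the antichain property above, $p$ is always the current rightmost, lowest vertex of its staircase and can be appended in $O(1)$; hence each $ST_t(NE)$ emerges already sorted as a ready-to-use monotone chain, and the enclosing rectilinear polygon is read off in time linear in its size. Summing the $O(\log m)$ per-point cost over $m$ points and adding the $O(m)$ assembly gives $O(m\log m)$ per quadrant and $O(m\log m)$ overall, which is in particular within the claimed $O(m\log m + mk)$ bound; the weaker $mk$ term simply leaves room for a simpler variant that replaces the Fenwick queries by an explicit incremental maintenance of the $O(k)$ relevant staircases during the sweep.

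I expect the main obstacle to be the two correctness facts rather than the timing: proving that each level is exactly a monotone staircase, and the insertion-monotonicity invariant that lets a single sweep assign every \emph{final} level correctly without ever re-ranking earlier points. Once these are in hand, the running-time accounting is routine, and extending from one quadrant to all four and from a fixed $t$ to all $t \le k$ simultaneously (the sweep produces every level at once) costs only the constant factor already noted.
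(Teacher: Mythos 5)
There is a genuine gap: the object you compute is not the staircase the algorithm needs. You take the definition literally and set $ST_t = \{p : d(p) = t\}$, where $d(p)$ is the number of points $p$ dominates, and then chain each bucket into a monotone polygon. But the $t$-level staircase used downstream (and the one the paper's proof actually computes) is the upper envelope of corner positions $z$ such that the rectangle cornered at $z$ contains at most $t$ blue points; equivalently, at each abscissa $x$ its height is the $(t+1)$-th smallest $y$-coordinate among the blue points to the left of $x$. A blue point whose own domination count is strictly less than $t$ can supply the top of this envelope over some $x$-interval, because the extra points that push the count up to $t$ lie to its lower \emph{right}. Concretely, take $a=(1,5)$, $b=(2,1)$, $c=(3,3)$: then $d(a)=d(b)=0$ and $d(c)=1$, so your $ST_1=\{c\}$ is a single point; yet for $x\in(2,3)$ a rectangle with right edge at $x$ contains at most one blue point iff its top edge stays below $y(a)=5$, so the level-$1$ staircase runs at height $5$ there and has a reflex vertex at $(x(c),y(a))$ --- a candidate support pair (top support $a$, right support $c$) that your construction never generates, since $a\notin ST_1$ in your sense.

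This is exactly why the paper's sweep maintains all $k+1$ order statistics $q^0_i,\dots,q^k_i$ of the $y$-values seen so far and cascades pointer updates when a newly swept point falls below several of them, paying $O(k)$ per event point on top of the $O(\log m)$ insertion: the $mk$ term in the bound is not slack to be absorbed by a Fenwick tree, it is the cost of recording how each of the $k+1$ envelopes changes height as the line advances. Your rank-query computation of $d(p)$ is correct as far as it goes (insertion in $x$-order never changes earlier counts, and each set of constant $d$ is indeed an antichain), but those facts characterize a different family of curves. To repair the argument you would need to replace the bucketing step by incremental maintenance of the $(t+1)$-th smallest $y$ for every $t\le k$ during the sweep, emitting a vertex whenever one of these values changes --- which is precisely the paper's proof.
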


\begin{figure}[htp]
	\centering
	\includegraphics[scale=0.16]{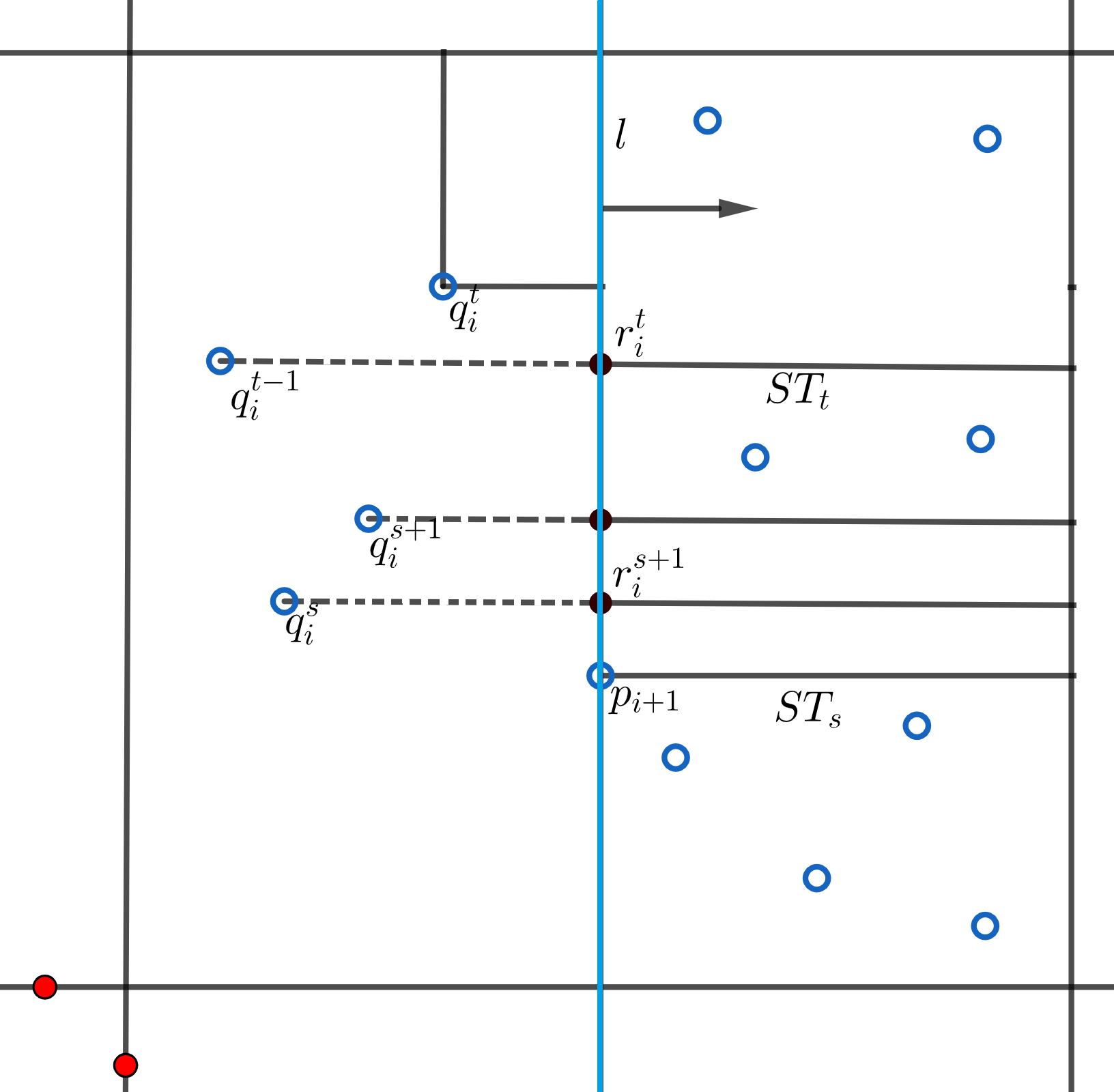}
	\caption{
		$ST_t$ is updated while sweeping vertical line $l$ over point $p_{i+1}$. 
		Since $q^s_i, \dots, q^t_i$ are higher than $p_{i+1}$, the Y coordinate of $ST_t$ is changed to that of $r^t_i$, the projection of $q^{t-1}_i$ on $l$.
		Similarly, the Y coordinates of $ST_{t-1}, \dots, ST_s$ are changed to those of $r^{t-1}_i, \dots, r^s_i$, respectively
	}
	\label{fig:st_k}
\end{figure}

\begin{proof}
We show how to compute $ST_t(NE)$ for all $t \leq k$ with a sweep line algorithm, as for other quadrants the approach is similar.
For simplicity, denote $ST_t(NE)$ as $ST_t$, that is, without specifying the quadrant.
Sort and label the points in $B_{NE}$ by increasing x-coordinate, and denote the resulting sequence as $p_1, \dots, p_m$. 
Sweep a vertical line $l$ from $x_0 = \max \{ x | (x, y) \in S_{min} \}$ to $x_1 = \infty$.
For any given position of $l$, let $P_l = \{ p_1, \dots, p_i \}$ be the set of blue points to the left of $l$. 
We maintain a balanced binary search tree $T$ over $P_l$, indexed by the y-coordinates of its elements.
The intersection of $ST_t$ with $l$ is a single point $q^t_i$, which is the highest point on $l$ that lies above at most $t$ points of $P_l$.
That is, $q^t_i$ is the $(t + 1)$-th smallest indexed entry in $T$, which we record.
As we move $l$ from left to right, $q^t_i$ can only change when $l$ intersects a point $p_i \in B_{NE}$. 
When we cross the point $p_{i+1}$ (called \textit{event point}), we insert it into $T$ and, for each $t \leq k$, we have two cases. 

1. If $p_{i+1}$ is higher than $q^t_i$, then $ST_t$ does not change height.

2. If $p_{i+1}$ is lower than $q^t_i, q^{t-1}_i, \dots, q^s_i$, for some $0 \leq s \leq t$, then $q^t_i$ is set to $q^{t-1}_i$ (which is done in $O(1)$ time), and $ST_t$ moves to the height of this entry. 
This update is repeated by setting $q^{t-1}_i$ to $q^{t-2}_i$ and so on downto $q^{s+1}_i$. 
Finally, $q^s_i$ is set to $p_{i+1}$.

If $ST_t$ changes when sweeping over $p_{i+1}$, we also record a point $r^t_i$ whose x-coordinate is that of $p_{i+1}$ and whose y-coordinate is that of the updated $q^t_i$. 
Let $Q$ be the set of all such points $q^t_i, r^t_i$ recorded during this process for all $t \leq k, i = 1, \dots, m$.
It is not hard to argue that $\cup_{t \leq k} ST_t \subseteq Q$. 
Moreover, $|Q| \leq m$ holds, as a point is added to $Q$ only when the sweep line crosses a point of $B_{NE}$. 
Finally, note that we encountered $O(m)$ event points $p_i$. 
For each of them, we spend $O(\log m)$ time to insert them into $T$ and $O(k)$ time for $q^t_i, r^t_i$ pointer updates.
Thus, the running time bound follows.
\end{proof}

Figure \ref{fig:st_k} illustrates the proof of Lemma \ref{lemma:1.1}.

Rather than computing the $ST_t$'s for each partition of $k$, we compute them all in $O(m \log m + m k)$ time before considering such partitions.
Then, for each of the $O(k^7)$ partitions, we need only solve a staircase problem in $O(m)$ time.
This gives us the following result.

\begin{theorem}
\label{theorem:1.2}
Given two sets $R, B$, with $|R| = n, |B| = m$, the largest rectangle enclosing $R$ and containing at most $k$ points in $B$ can be found in $O(k^7 m + m \log m + n)$ time.
\end{theorem}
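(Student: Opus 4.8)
The plan is to keep the enumeration scheme of \cite{Armaselu-FWCG} intact but to hoist the expensive staircase construction out of the inner loop over partitions. First I would compute $S_{min}$ in $O(n)$ time and discard the blue points it contains, leaving the $m$ relevant blue points distributed among the eight regions. I would presort the points of each side region by distance to $S_{min}$, and then, for each quadrant $q$, invoke Lemma~\ref{lemma:1.1} exactly once to produce all staircases $ST_t(q)$ for every $t \le k$ simultaneously in $O(m \log m + mk)$ total time. This is the single step at which the $m \log m$ cost is paid, and the key observation is that it is now paid only once rather than once per partition, as it was in the original step~2.

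Next I would iterate over the $O(k^7)$ partitions $k = k_E + k_{NE} + \cdots + k_{SE}$, performing only steps~1 and~3 from the original outline. For each partition, step~1 selects, for each side region $q$, the $(k_q+1)$-th closest blue point to $S_{min}$; these points together support the bounding rectangle $S_{max}$, and each is now an $O(1)$ lookup thanks to the presorting. Step~3 then solves the staircase problem — find the largest rectangle enclosing $S_{min}$, contained in $S_{max}$, and supported by the four relevant staircases $ST_{k_{NE}}(NE), \dots, ST_{k_{SE}}(SE)$ — in $O(m)$ time, exactly as established in \cite{Armaselu-FWCG}. Summing over all $O(k^7)$ partitions gives $O(k^7 m)$ for this phase. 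Adding the three contributions $O(n)$, $O(m\log m + mk)$, and $O(k^7 m)$, and absorbing $mk$ into $k^7 m$ for $k \ge 1$, yields the claimed $O(k^7 m + m\log m + n)$ bound.

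The one point needing care — and the main (mild) obstacle — is ensuring that the precomputed staircases are stored so that, for any partition, the specific staircase $ST_{k_q}(q)$ can be retrieved and handed to the staircase solver without re-incurring a logarithmic or superlinear cost. I would verify that the sweep of Lemma~\ref{lemma:1.1}, whose recorded vertex set $Q$ has total size $O(m)$ and is naturally indexed by the level $t$, permits the vertices of each individual $ST_t(q)$ to be enumerated in time linear in that staircase's size. Establishing this guarantees that step~3 genuinely runs in $O(m)$ per partition and that the time accounting above is sound, completing the proof.
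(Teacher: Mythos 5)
Your proposal is correct and follows essentially the same route as the paper: compute all the staircases $ST_t(q)$ once via Lemma~\ref{lemma:1.1} in $O(m\log m + mk)$ time, then solve an $O(m)$-time staircase problem for each of the $O(k^7)$ partitions of $k$, yielding $O(k^7 m + m\log m + n)$ overall. The extra bookkeeping you flag (presorting side regions, retrieving each $ST_t(q)$ in linear time from the recorded set $Q$) is a reasonable elaboration of details the paper leaves implicit.
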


\subsection{A Closer Look at the Number of Candidate Partitions of $k$}
\label{MBSR-O-closer-look}

In the previous section, we reduced the running time by a factor of $\log m$. 
However, it seems hard to further improve this bound given the high number of partitions of $k$.
In this subsection, we show how to reduce the running time by reducing the number of candidate partitions of $k$.

To do that, we first compute all the $t$-level staircases $ST_t, 0 \leq t \leq k$ as described in the previous section.
We then consider the blue points in 4 pairs of adjacent regions, e.g., $N$ and $NE$.
That is, we suppose the total number of outliers coming from $B_N \cup B_{NE}$, denoted $k_{NNE} = k_N + k_{NE}$, is fixed.
Similarly, we supose $k_{ESE} = k_E + k_{SE}, k_{SSW} = k_S + k_{SW}, k_{WNW} = k_W + k_{NW}$ are fixed.
Let $ST(Q) = \cup_{t=1}^{k}{ST_t(Q)}$, for any quadrant $Q$.
From now on, we focus on the $N$ and $NE$ regions and, for simplicity, we denote $ST(NE)$ as simply $ST$ and $ST_t(NE)$ as simply $ST_t$.

We notice that, even though any points of any $t$-th level staircase, $t \leq k_{NE}$, may be a corner for a candidate rectangle,
most of these rectangles can be discarded as they are guaranteed to be smaller than the optimal rectangle.

For every pair $P = (P_1, P_2)$ of regions and every integer $t: 0 \leq t \leq k$, denote by $S^P_t$ the set of pairs $(p, q) \in (B_{P_1} \cup B_{P_2})^2$ that may define an optimal solution, with $p$ as top support and $q$ with right support,
among all rectangles containing $t$ blue points from $B_{P_1} \cup B_{P_2}$.
From now on, whenever understood, we are going to remove the superscript and simply write $S_t$, e.g., $S_{k_{NNE}}$ instead of $S^{NNE}_{k_{NNE}}$.
For every $t$, we store $S_t$ as an array.

\begin{figure}[htp]
	\centering
	\includegraphics[scale=.2]{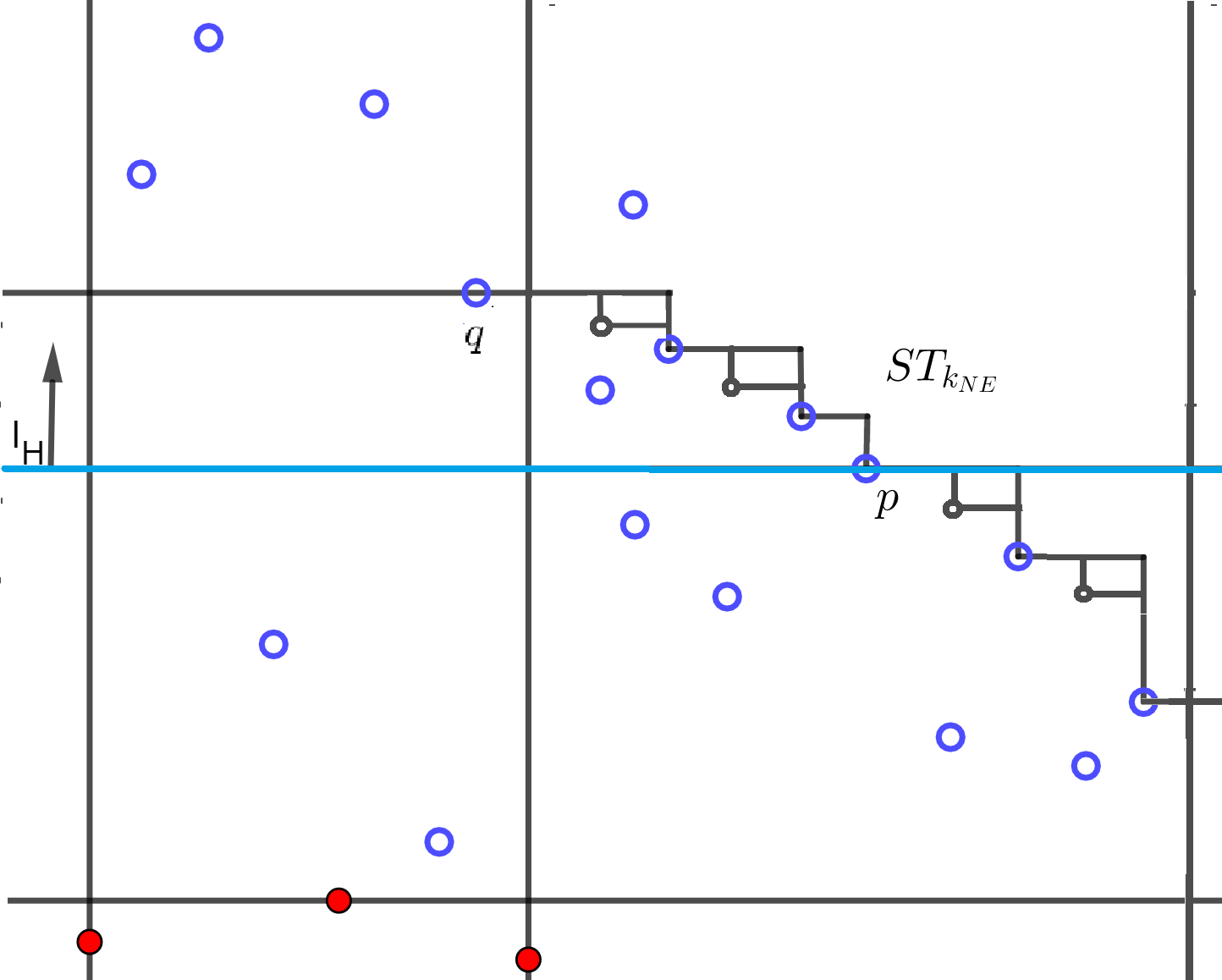}
	\caption{
		$B_N \cup B_{NNE}$ is swept with a horizontal line $l_H$ sliding upwards from the lowest blue point.
		At each blue point $p$ encountered, the set $S_{k_{NNE}}$ is updated.
		Black dots denote staircase points that are not blue points.
	}
	\label{fig:horizontal-sweep}
\end{figure}

We extend the above definition to points $p, q \in B_q$ for other quadrants $q$, by flipping inequalities accordingly.

The goal is to compute $S_{k_{NNE}}$.
Suppose we have already computed all $S_t: t < k_{NNE}$.
Sweep $B_N \cup B_{NE}$ with a horizontal line $l_H$ going upwards, starting at the $k_{NNE} + 1$-th lowest blue point in $B_N \cup B_{NE}$, as shown in Figure \ref{fig:horizontal-sweep}.
For every blue point $p$ encountered as top support, let $below(p)$ be the highest point in $B_N$ below $p$, and $above(p)$ be the lowest point in $B_N$ above $p$.
For every $p \in B_{NE}$, let $rb(p)$ be the leftmost point in $B_{NE}$ to the right of $p$ and below $p$.
Let $t_N$ be the blue point count below $p$ from $B_N$. 
Also let $t$ be the number of points dominated by $p$ from $B_N \cup B_{NE}$.

First, assume $p \in B_{NE}$ and let $t_{NE} = t - t_N$ be the number of points dominated by $p$ from $B_{NE}$, i.e., $p \in ST_{t_{NE}}$.
When sweeping the next blue point, say $q$, we do the following.

\textbf{Case 1.}
If $q$ is to the right of $p$, then $q$ is below $above(p)$ but dominates $p$, the points domiated by $p$, and the points in $T(p, q) = \{s \in B_{NE} \cap ST_{t_{NE}} | x(p) < x(s) < x(q)\}$  (Figure \ref{fig:outlier-case1}).
If $t + t_{pq} = k_{NNE}$, then we add $(q, rb(q))$ to $S_{t + t_{pq}}$, where $t_{pq} = 1 + |T(p, q)|$.

\begin{figure}[htp]
	\centering
	\includegraphics[scale=.2]{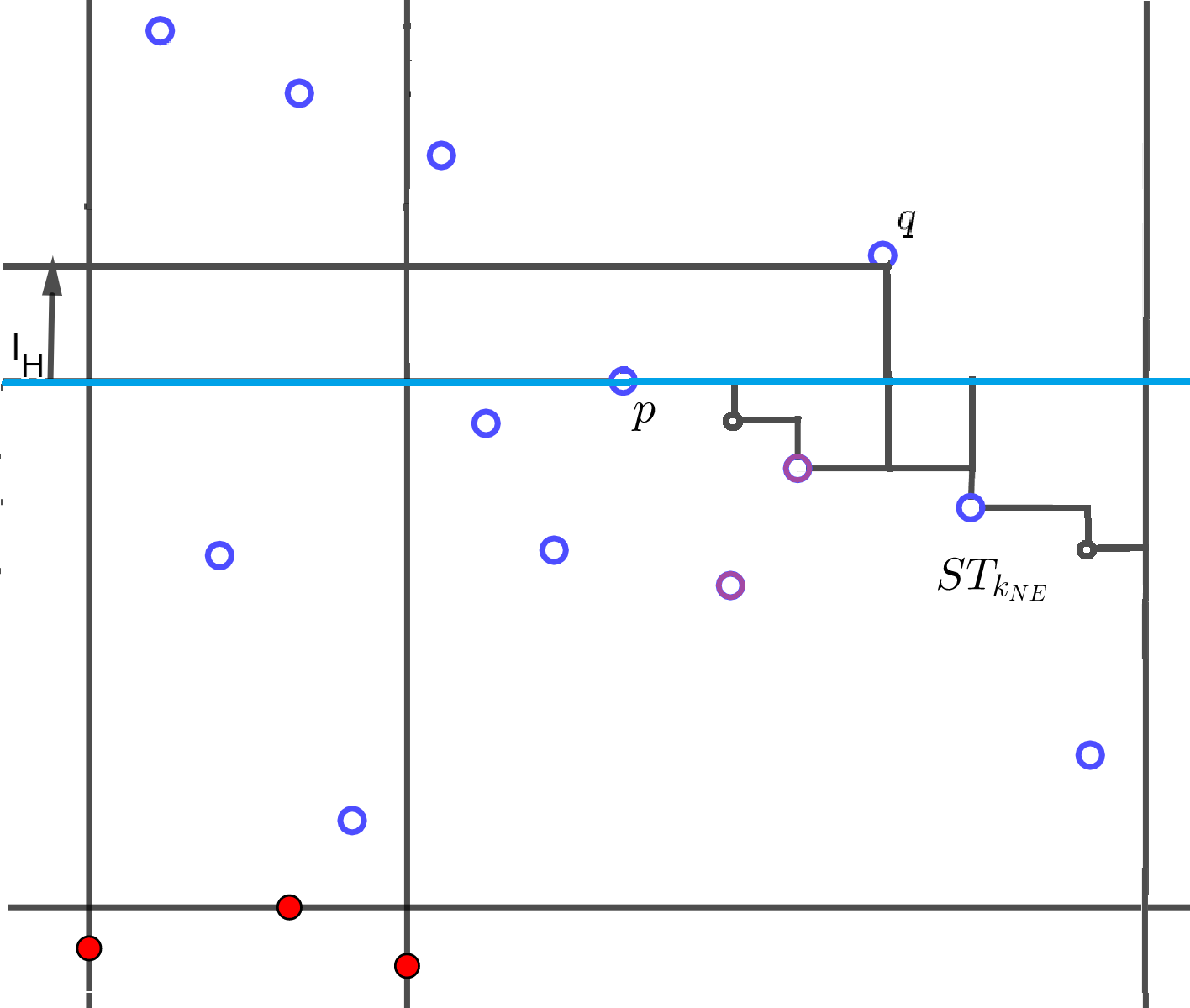}
	\caption{
		Case 1. $p \in B_{NE}$, $q$ to the right of $p$. The purple empty dots denote $T(p, q)$.
	}
	\label{fig:outlier-case1}
\end{figure}

\textbf{Case 2.}
If $q \in B_{NE}$ and $q$ is to the left of $p$, then $q$ is below $above(p)$ but dominates the points dominated by $p$, except the ones in $U(q, p) = \{s \in B_{NE} | x(q) < x(s) < x(p), y(s) < y(p)\}$.  
For each $s \in U(q, p)$, if $t - i(s) = k_{NNE}$, then we add $(q, s)$ to $S_{t - i(s)}$, where $i(s)$ is the index of $s$ in $U(q, p)$ in decreasing order of X coordinates.
Finally, if $t = k_{NNE}$, then we add $(q, p)$ to $S_t$.


\textbf{Case 3.}
If $q \in B_N$ then, for each $s \in U(p) = \{s \in B_{NE} | x(s) < x(p), y(s) < y(p)\}$ such that $t - i(s) = k_{NNE}$, we add $(q, s)$ to $S_{t - i(s)}$, where $i(s)$ is the index of $s$ in $U(p)$ in decreasing order of X coordinates.
Finally, if $t = k_{NNE}$, then we add $(q, p)$ to $S_t$. 


Now assume $p \in B_N$ and let $t_{NE}$ be the largest $t'$ such that all points in any $ST_{t'}$ are below $p$.
Let $b(p)$ be the leftmost point of $ST_{t_{NE}}$ below $p$.
When sweeping the next blue point, say $q$, we do the following.

\textbf{Case 4.}
If $q$ is to the right of $b(p)$ then, if $t = k_{NNE} - 1$, we add $(q, rb(q))$ to $S_{t+1}$.
For each $s \in U(q)$ such that $t - i(s) = k_{NNE}$, we also add $(q, s)$ to $S_{t - i(s)}$. 


\textbf{Case 5.}
If $q \in B_{NE}$ and $q$ is to the left of $b(p)$, then, if $t = k_{NNE} - 1$, we add $(q, s)$ to $S_{t+1}$ for every $(p, s) \in S_t$. 


\textbf{Case 6.}
If $q \in B_N$, then, if $t = k_{NNE} - 1$, we add $(q, s)$ to $S_{t+1}$ for every $(p, s) \in S_t$. 


The following lemma puts an upper bound on the storage required by $S_{k_{NNE}}$.

\begin{lemma}
\label{lemma:1.3}
$|S_{k_{NNE}}| = O(m)$.
\end{lemma}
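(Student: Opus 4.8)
The plan is to bound $|S_{k_{NNE}}|$ by charging each stored pair to its top support and then showing that every point serves as the top support of only $O(1)$ pairs. The first observation I would make, by inspecting the six cases, is that every pair inserted while the sweep processes a point $q$ (as the successor of the previously handled top support $p$) has $q$ itself as its top support: Cases 1--4 insert pairs of the form $(q, rb(q))$, $(q, p)$ or $(q, s)$, and Cases 5--6 insert pairs $(q, s)$. Since each point of $B_N \cup B_{NE}$ is the swept successor of exactly one predecessor, the pairs of $S_{k_{NNE}}$ are partitioned according to their top support, so it suffices to show that each top support $q$ receives only a constant number of right supports.

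For Cases 1--4 this is essentially immediate. The conditions that trigger an insertion, such as $t + t_{pq} = k_{NNE}$ or $t - i(s) = k_{NNE}$, pin down a single value: in the latter the index $i(s) = t - k_{NNE}$ is fixed, and since $i(\cdot)$ is injective on $U(q,p)$ (resp.\ $U(p)$, $U(q)$), at most one point $s$ qualifies. Together with the at most one extra pair $(q, p)$ or $(q, rb(q))$, each of these cases contributes $O(1)$ right supports to $q$.

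The crux is Cases 5 and 6, which copy $(q, s)$ for every pair $(p, s) \in S_{k_{NNE}-1}$ and therefore transfer to $q$ as many right supports as $p$ already owns at the previous level. I would control this by an induction on $t$, carrying the invariant that in each $S_t$ every point occurs as a top support with only $O(1)$ right supports. The geometric content behind the invariant is a monotonicity argument: for a fixed top edge (fixed top support $p$) and a rectangle anchored at $S_{min}$ and grown rightward, the number of enclosed points of $B_N \cup B_{NE}$ is non-decreasing in the right-edge abscissa, since all of $B_N$ lies within the $x$-span of $S_{min}$ and only the dominated points of $B_{NE}$ are gained as the edge moves right. Hence there is a unique rightmost position keeping this count equal to $k_{NNE}$, i.e.\ a unique maximal right support for $p$; the symmetric monotonicity in the vertical direction gives the mirror uniqueness. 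Thus genuinely maximal candidates inject into the points, giving at most one right support per top support up to the constant slack from the non-maximal candidates the sweep also retains. Under this invariant Cases 5--6 copy only $O(1)$ pairs, $q$ again accrues $O(1)$ right supports, and the invariant propagates from level $k_{NNE}-1$ to level $k_{NNE}$.

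The main obstacle I anticipate is the inductive step itself: verifying, case by case and accounting for the two regions $B_N$ and $B_{NE}$ simultaneously, that no point can acquire more than a constant number of right supports, and in particular that the bulk copy in Cases 5--6 cannot compound across levels. Care is also needed with degeneracies (points sharing a coordinate, and the exact convention for which boundary points are counted among the $k_{NNE}$ outliers) so that the monotonicity and the uniqueness of the maximal right support hold as stated. Once the invariant is established, summing $O(1)$ over the $O(m)$ possible top supports yields $|S_{k_{NNE}}| = O(m)$.
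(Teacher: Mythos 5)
Your charging scheme (charge each pair to its top support, then claim every top support carries $O(1)$ right supports) diverges from the paper's accounting, and the divergence sits exactly where your argument is incomplete. The paper does \emph{not} assert that a top support carries $O(1)$ right supports: its treatment of Cases 5 and 6 explicitly allows a single swept point $q$ to receive up to $|ST_{t_{NE}}| = O(m)$ right supports in one bulk copy, and the lemma is instead rescued by the observation that this bulk copy fires at most once during the sweep --- the count $t$ of points dominated by the current top support is monotone along the sweep, so the triggering condition $t = k_{NNE}-1$ cannot recur after the copy. Your invariant (every $S_t$ has $O(1)$ right supports per top support) is precisely what would make Cases 5 and 6 contribute $O(1)$ per event, but you leave its inductive step (``the bulk copy cannot compound across levels'') as an anticipated obstacle rather than proving it, and as stated it is in tension with the algorithm itself, which in Case 5 transfers \emph{all} pairs of $S_t$ associated with $p$ onto the new top support $q$, however many there are. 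The geometric uniqueness you invoke (for a fixed top edge there is a unique maximal right-edge position enclosing exactly $k_{NNE}$ points) bounds the pairs that are geometrically \emph{necessary}, not the pairs the sweep actually \emph{stores}; closing the gap would require showing the data structure never retains more than $O(1)$ pairs per top support, which you do not do.

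Your analysis of Cases 1--4 is correct and coincides with the paper's: the conditions $t + t_{pq} = k_{NNE}$ and $i(s) = t - k_{NNE}$ each select at most one pair per event, giving $O(m)$ over the sweep. To finish, replace the per-top-support induction for Cases 5 and 6 with the paper's one-shot argument: the $O(m)$-sized copy occurs only once, so it contributes $O(m)$ to the total rather than $O(m)$ per event, and $|S_{k_{NNE}}| = O(m)$ follows.
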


\begin{proof}
In case 1, we only one pair to $S_{k_{NNE}}$. 
In case 2, even though we consider $|U(q, p)|$ points, we only add the pair $(q, s)$ such that $t - i(s) = k_{NNE}$.
Similarly, in cases 3 and 4 we only add the pair $(q, s): t - i(s) = k_{NNE}$, even though we consider $|U(p)|$ (resp., $|U(q)|$) points.
In case 5, we add at most $|ST_{t_{NE}}|$ pairs if $t = k_{NNE} - 1$.
However, note that for the subsequent point $q'$ swept, we would have a larger number$t'$ of blue points in $B_N \cup B_{NE}$ dominated by $q'$.
Thus, we only add at most $|ST_{t_{NE}}| = O(m)$ pairs once.
Similarly, in case 6 we only add $O(m)$ pairs once.
\end{proof}

This lemma bounds the running time of the aforementioned sweeping approach.

\begin{lemma}
\label{lemma:1.4}
For any $t: 0 \leq t \leq k$, the horizontal line sweeping desccribed above takes $O(m \log m)$ time.
\end{lemma}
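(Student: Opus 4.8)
The plan is to charge the entire cost of the sweep to its $O(m)$ event points and to show that each event, together with the data-structure maintenance it triggers, runs in $O(\log m)$ amortized time. First I would sort the points of $B_N \cup B_{NE}$ by $y$-coordinate once, in $O(m \log m)$ time, so that the upward sweep visits them in order. During the sweep I maintain the points already passed in a balanced binary search tree $T$ keyed by $x$-coordinate, augmented with subtree sizes so that it doubles as an order-statistics tree; I also keep the running counters $t$, $t_N$, and $t_{NE}$, each of which changes by a bounded amount whenever $l_H$ crosses a point and can therefore be refreshed in $O(\log m)$ per event.

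The auxiliary quantities $below(p)$, $above(p)$, $rb(p)$, and $b(p)$ are each nearest-neighbor queries in one coordinate restricted to a region, so each is answered by a predecessor/successor query in $T$ in $O(\log m)$ time, and deciding which of the six cases applies to a consecutive pair $(p, q)$ is likewise an $O(\log m)$ comparison of $q$ against these reference points. Case 1 adds a single pair, the only nontrivial cost being the count $|T(p,q)|$ needed for $t_{pq}$, which is a range-count query in $T$ and hence also $O(\log m)$.

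The crux is Cases 2, 3, and 4, where the description ranges over the sets $U(q,p)$, $U(p)$, or $U(q)$, any of which may contain $\Theta(m)$ points; a naive scan would inflate the bound to $O(m^2)$. The key observation is that although the whole set is \emph{considered}, only the single point $s$ with $i(s) = t - k_{NNE}$ is actually inserted into $S_{k_{NNE}}$. Since $i(s)$ is a rank in decreasing $x$-order inside a query rectangle, and the candidate points occupy a contiguous $x$-range to the left of and below $p$, this $s$ is precisely the element of prescribed rank in an $x$-range of $T$, which the order-statistics augmentation returns in $O(\log m)$ time. Thus each of these cases costs $O(\log m)$ per event rather than $O(|U|)$.

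The remaining obstacle, and the main difficulty, is Cases 5 and 6, which copy every pair $(p, s) \in S_t$ forward to $S_{t+1}$ and can therefore add $\Theta(m)$ pairs at a single event. Here I would invoke the amortized argument already established in the proof of Lemma~\ref{lemma:1.3}: such a bulk copy happens only when $t = k_{NNE} - 1$, and the dominated count strictly increases at the next swept point, so across the entire sweep these two cases insert at most $|ST_{t_{NE}}| = O(m)$ pairs in total rather than per event. Charging this $O(m)$ once, adding the $O(m)$ events at $O(\log m)$ each, plus the initial $O(m \log m)$ sort, the sweep runs in $O(m \log m)$ time, as claimed. The whole bound thus hinges on separating the work that is merely \emph{considered} from the work that is actually \emph{inserted}: replacing the linear scans of the $U(\cdot)$ sets by a single rank query, and amortizing the bulk copies of Cases 5 and 6.
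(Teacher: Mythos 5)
Your proof is correct and follows essentially the same route as the paper's: charge $O(\log m)$ per event for the tree queries, observe that in Cases 2--4 only the single point of prescribed rank $i(s)=t-k_{NNE}$ is actually inserted (so one search suffices instead of scanning $U(\cdot)$), and amortize the $O(m)$ bulk copy of Cases 5 and 6 to a one-time cost. Your order-statistics augmentation just makes explicit the rank query the paper performs implicitly on its BST indexed by $x$-coordinate.
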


\begin{proof}
We store the blue points in $B_N \cup B_{NE}$ in two balanced binrary search trees $X, Y$, indexed by X (resp., Y) coordinates.
Thus, for each blue point $p$ swept, we require $O(\log m)$ time.
We require an extra $O(\log m)$ time to compute $above(p), below(p)$, and $rb(p)$.
In case 1, note that we can compute $t_{pq}$ by finding the position of $q$ in the X-sorted order of $ST_{k_{NE}}$, and thus the number of blue points $s: x(p) < x(s) < x(q)$, 
in $O(\log m)$ time, since $ST_{k_{NE}}$ is maintained as a binary search tree.
Thus, we only require an extra $O(\log m)$ time to handle case 1.
In cases 2 and 4, note that we only need to add $(q, s)$ to $S_{k_{NNE}}$ if $i(s) = t - k_{NNE}$, so we query for $s$ in $X$ using $O(\log m)$ time.
Similarly, in case 3 we only add $(q, p)$ to $S_{k_{NNE}}$ if $i(s) = t - k_{NNE}$, so we query $X$ for $s$ in $O(\log m)$ time.
Now in cases 5 and 6 we spend $O(m)$ time to traverse $S_t$, since we store $S_t$ as an array for any $t$,
but they only occur once, so this gives us $O(m)$ total time.
In every case, since $S_t$ is an array, adding a pair to $S_t$ takes $O(1)$ time.
Since we sweep $O(m)$ blue points, the result follows.
\end{proof}

\textbf{Corollary}. We compute $S_t$ in $O(k m \log m)$ time for all $t: 0 \leq t \leq k$.
This holds for any quadrant.

We reduce the number of candidate partitions of $k$ from $O(k^7)$ to $O(k^3)$ as follows.
By writing $k = k_{NNE} + k_{ESE} + k_{SSW} + k_{WNW}$, we can deduce $k_{NNE} = k - k_{WNW} - k_{ESE} - k_{SSW}$ for every combination of $k_{WNW}, k_{ESE}, k_{SSW}$.
Therefore, there are $O(k^3)$ such combinations.

Initially, we compute $S^Q_t$ for every quadrant $Q$ and $0 \leq t \leq k$.
Then, for each combination $(k_1, k_2, k_3), k_1, k_2, k_3 = 0, \dots k, k_1 + k_2 + k_3 \leq k$, we set $k_4 = k - k_1 - k_2 - k_3$ 
and solve the staircase problem in \cite{Armaselu-CCCG} with the pairs $S^{WNW}_{t_1} \cup S^{ESE}_{t_2} \cup S^{SSW}_{t_3} \cup S^{NNE}_{k_4}$ as pairs of supports.
Each staircase problem takes $O(m)$ time to solve, so we require $O(k^3 m)$ time for all candidate partitions of $k$.
Putting this together with the result in Lemma \ref{lemma:1.1}, we get the following result.

We obtain the following result.

\begin{theorem}
\label{theorem:1.5}
Given two sets $R, B$, with $|R| = n, |B| = m$, the largest rectangle enclosing $R$ and containing at most $k$ points in $B$ can be found in $O(k^3 m + k m \log m + n)$ time.
\end{theorem}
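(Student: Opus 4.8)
The plan is to assemble the machinery built in Sections \ref{MBSR-O-improved} and \ref{MBSR-O-closer-look} into a single pipeline and then account for its cost term by term. First I would compute $S_{min}$ in $O(n)$ time and discard every blue point it contains, since those outliers are unavoidable and do not affect the optimum; this also fixes the eight regions $E, N, W, S, NE, NW, SW, SE$. Next I would invoke Lemma \ref{lemma:1.1} a single time to produce all staircases $ST_t(Q)$ for every quadrant $Q$ and every $t \leq k$ in $O(m \log m + mk)$ time. This is the only place where staircases are constructed, and they are shared across all subsequent partition enumerations.

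The central idea is to avoid the $O(k^7)$ enumeration of partitions $k = k_E + k_{NE} + \cdots + k_{SE}$ by grouping the eight regions into the four adjacent pairs $NNE, ESE, SSW, WNW$ and working with their combined outlier budgets. For each such pair $P$ and each $t \leq k$ I would compute the candidate support set $S^P_t$ using the horizontal sweep analyzed in Lemmas \ref{lemma:1.3} and \ref{lemma:1.4}; by the Corollary this costs $O(km \log m)$ per pair, hence $O(km \log m)$ overall since the number of pairs is constant. The payoff is that a distribution of the total budget is now determined by the triple $(k_1, k_2, k_3)$ assigned to three of the pairs, with the fourth forced to $k_4 = k - k_1 - k_2 - k_3$; there are only $O(k^3)$ such triples. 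For each of them I would form the union $S^{WNW}_{k_1} \cup S^{ESE}_{k_2} \cup S^{SSW}_{k_3} \cup S^{NNE}_{k_4}$ of precomputed support pairs and feed it to the staircase subroutine of \cite{Armaselu-CCCG}, which returns the largest $R$-enclosing, $S_{max}$-bounded rectangle supported by these pairs in $O(m)$ time. Reporting the best rectangle over all $O(k^3)$ triples yields the answer. Summing the costs gives $O(n)$ for $S_{min}$, $O(m \log m + mk)$ for the staircases, $O(km \log m)$ for all support sets, and $O(k^3 m)$ for the $O(k^3)$ staircase problems; the first two terms are dominated, leaving $O(k^3 m + km \log m + n)$ as claimed.

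The step I expect to be the main obstacle is correctness rather than timing, and it has two parts. First, I must show that restricting attention to the precomputed pairs in $S^P_t$ loses no optimal rectangle, i.e.\ that the top and right supports of any optimal rectangle drawn from an adjacent region pair always appear among the recorded pairs. This follows from the characterization that a rectangle containing exactly $t$ blue points of a quadrant is bounded by $ST_t$, combined with the exhaustive Cases 1--6, which insert into $S^P_t$ every support pair capable of bounding such a rectangle. Second, I must justify that enumerating only partitions of $k$ into exactly four summands suffices for the at-most-$k$ requirement. For this I would argue that an area-optimal feasible rectangle is maximal and hence supported by a blue point on each of its four sides; if it contained $j < k$ outliers, translating one side outward just past the next blue point would strictly increase the area while raising the outlier count only to $j + 1 \leq k$, contradicting optimality. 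Since a bounded optimum is assumed to exist, all four sides are finitely blocked and such a next point always exists, so an optimum using exactly $k$ outliers may be assumed, and it is therefore discovered by one of the $O(k^3)$ enumerated partitions.

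If either half of this correctness argument were to fail, the natural fallback is to enumerate all triples with $k_1 + k_2 + k_3 \leq k$ (and thus all four-part partitions with sum \emph{at most} $k$), which is still $O(k^3)$ combinations and hence does not change the stated running time; this makes the bound robust to the subtlety of whether the optimum must use the full budget.
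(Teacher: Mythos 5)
Your proposal follows essentially the same route as the paper: precompute all staircases via Lemma \ref{lemma:1.1}, build the candidate support sets $S^P_t$ for the four adjacent region pairs using the sweep of Lemmas \ref{lemma:1.3} and \ref{lemma:1.4}, and then enumerate the $O(k^3)$ budget triples, solving one $O(m)$ staircase problem per triple. Your added discussion of the exactly-$k$ versus at-most-$k$ subtlety (and the $O(k^3)$ fallback enumeration) is a reasonable supplement but does not change the argument or the bound.
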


\section{Finding the Largest Axis Aligned Rectangle Enclosing $R$ and avoiding Unit Circles}
\label{MBSR-C}

In this version, called the \textit{circles} version (MBSR-C), $B$ consists of unit circles that do not intersect $S_{min}$, 
and the goal is to find the largest axis-aligned rectangle enclosing $R$ that avoids all circles.

\begin{figure}[htp]
	\centering
	\includegraphics[scale=0.35]{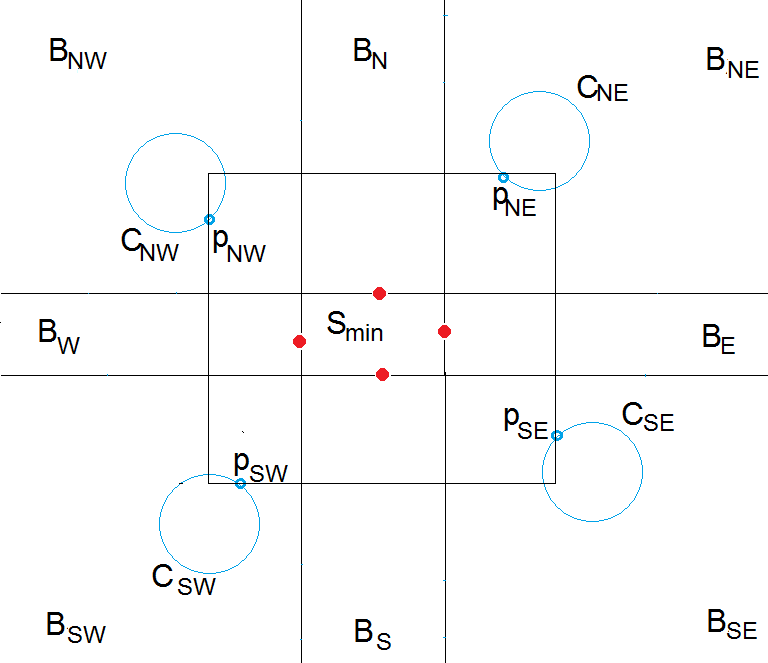}
	\caption{If the maximum rectangle separating $R$ and $B'$ were bounded by point $p \in C$, it would intersect $C$}
	\label{fig:circles}
\end{figure}

Note that the reduction in \cite{Armaselu-FWCG} for finding the largest separating rectangle among rectangles does not work.
To see why this is the case, let $C_{NW}, C_{SW}, C_{SE}, C_{NE}$ be circles in the regions $B_{NW}, B_{SW}, B_{SE}, B_{NE}$. 
If one picks any point $p$ on the quadrant of $C_{NW}$ that is the closest to $S_{min}$, and adds it to $B'$ (and does the same for $C_{SW}, C_{SE}, C_{NE}$), 
then, if the maximum rectangle separating $R$ and $B'$ is top-bounded or right-bounded by $p$, 
it intersects $C_{NW}$ (as shown in Figure \ref{fig:circles}).

\begin{figure}[htp]
	\centering
	\includegraphics[scale=0.4]{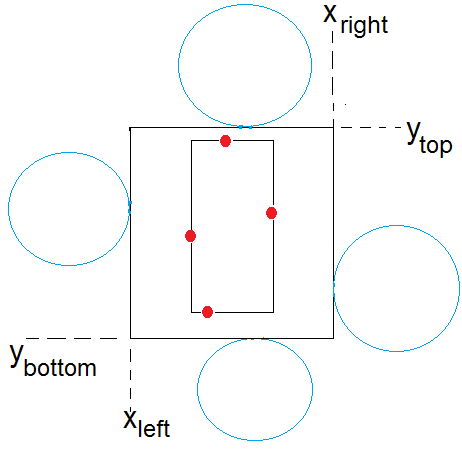}
	\caption{A circle bounding the rectangle at an edge fixes that edge in terms of X or Y coordinate}
	\label{fig:circle-edge}
\end{figure}

A \textit{candidate separating rectangle} (CSR) is a rectangle that encloses $R$ and cannot be extended in any direction without intersecting some circle.
Notice that a CSR may touch a circle either at an edge or at a corner.
If it is bounded at an edge, then that edge is fixed in terms of X or Y coordinate and the arc it touches at each endpoint of the edge is uniquely determined (Figure \ref{fig:circle-edge}).
On the other hand, if it is bounded at a corner, then the corner can be slided along the appropriate arc of the circle (Figure \ref{fig:circle-corner}).
Each position of the corner determines the X or Y coordinates of its two adjacent edges, and thus the arcs pinning the two adjacent corners, if any.

We say that an edge $e$ of a CSR is \textit{pinned} by a circle $C$, if $C$ touches the interior of $e$.

A horizontal (resp., vertical) edge $e$ is said to be \textit{fixed} by two circles $C_1, C_2$ in terms of Y (resp., X) coordinate, if:

(1) the ends of $e$ are on $C_1$ and $C_2$, respectively, and

(2) changing its Y (resp., X) coordinate would result in either $e$ intersecting $C_1$ or $C_2$ or failing to touch both $C_1$ and $C_2$.


\begin{figure}[htp]
	\centering
	\includegraphics[scale=0.4]{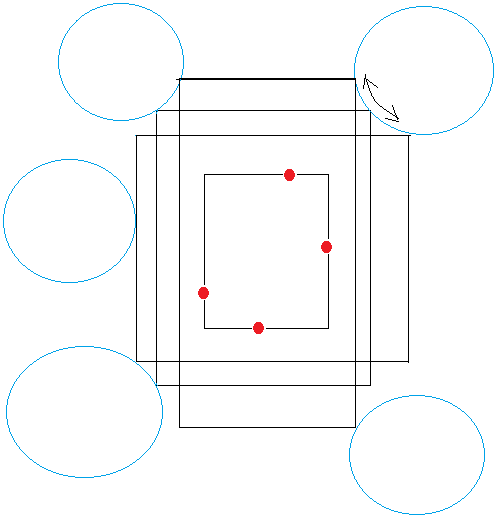}
	\caption{A circle bounding the rectangle at a corner allows the corner to slide along the arc. 
		Each position of the corner determines its two adjacent edges}
	\label{fig:circle-corner}
\end{figure}

\subsection{A description of all cases in which a CSR can be found}
\label{MBSR-C-cases}

Based on the number of edges of a CSR pinning by circles, we consider the following cases.

\begin{figure}[htp]
	\centering
	\includegraphics[scale=0.4]{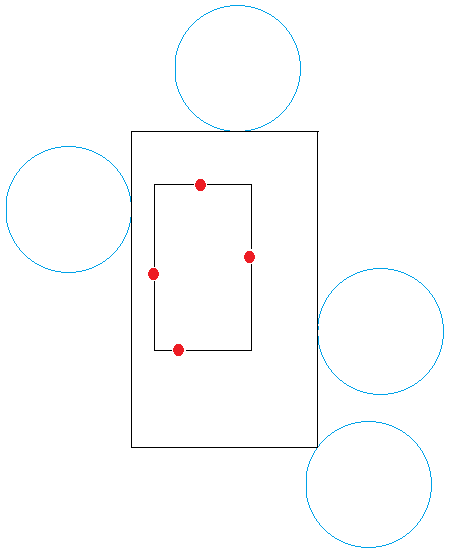}
	\caption{Case 1
}
	\label{fig:circle-case1}
\end{figure}

\textbf{Case 1}. Three edges pinned by circles (Figures \ref{fig:circle-case1}). 
In this case, we extend the the fourth edge outward from $S_{min}$ until it touches a circle.
Thus, the CSR is uniquely determined.

\textbf{Case 2}. Two edges are pinned by two circles $C_1, C_2$.
Here we distinguish the following subcases.

\begin{figure}[htp]
	\centering
	\includegraphics[scale=0.4]{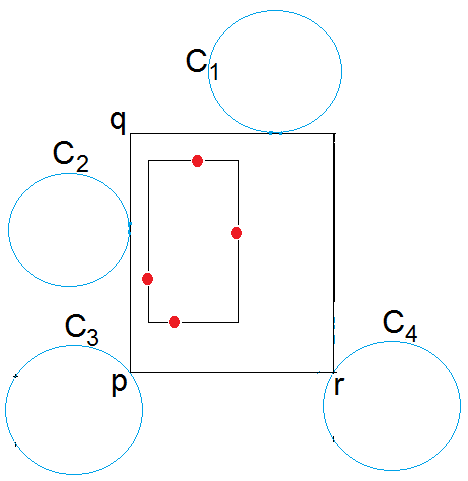}
	\caption{Case 2.1
}
	\label{fig:circle-case2-1}
\end{figure}

\textbf{Case 2.1}. Two adjacent edges are pinned by $C_1, C_2$.
Note that their common corner $q$ is fixed.
We extend one of the edges by moving its other end $p$ away from $q$, until it touches a circle $C_3$,
and then extend the third edge until it touches a circle $C_4$ at a point $r$ (Figure \ref{fig:circle-case2-1}).
The resulting CSR is unique.


\textbf{Case 2.2}. Two adjacent edges are pinned by $C_1, C_2$.
We extend one of the edges by moving its other end $p$ away from $q$, until the orthogonal line through $p$ touches a circle $C_3$ at a point $r$. 
While moving $p$, the point $r$ can slide along one or more circles in the same quadrant, giving an infinite number of CSRs.


\textbf{Case 2.3}. Two opposite edges are pinned by $C_1, C_2$.
We slide the other two edges outward from $S_{min}$ until each of them touches some circle. 
This gives us a unique CSR.

\textbf{Case 3}. One edge $e$ is pinned by a circle $C_1$.
We have the following subcases.

\begin{figure}[htp]
	\centering
	\includegraphics[scale=0.4]{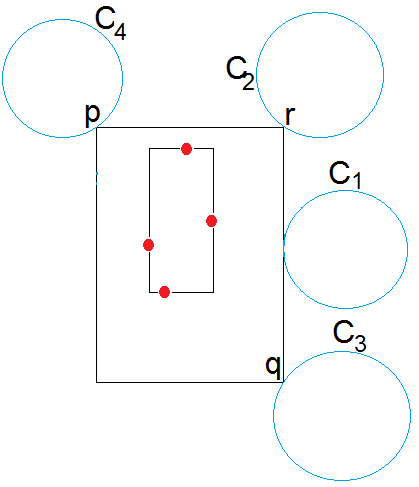}
	\caption{Case 3.1
}
	\label{fig:circle-case3-1}
\end{figure}

\textbf{Case 3.1}. When $e$ is extended in both directions, it touches two circles $C_2, C_3$ (Figure \ref{fig:circle-case3-1}). 
We then slide the fourth edge outward from $S_{min}$ until it touches a circle $C_4$ and we have a unique CSR.


\textbf{Case 3.2}. When $e$ is extended in both directions, the orthogonal line through one of the ends $p$ touches a circle $C_2$ at point $q$. 
While moving $p$, $q$ can slide along one or more circles in the same quadrant, yielding an infinite number of CSRs.
After establishing the position of $q$, we slide the fourth edge away from $S_{min}$ until it touches a circle $C_3$.

\textbf{Case 4}. No edge is pinned by any circle.
In this case, all corners can slide along circles until one of the edges becomes pinned by some circle, giving an infinite number of CSRs.
Suppose the position of a corner $p$ along a circle $C_1 \in B_{NE}$ is known.
We consider the following subcases.

\begin{figure}[htp]
	\centering
	\includegraphics[scale=0.4]{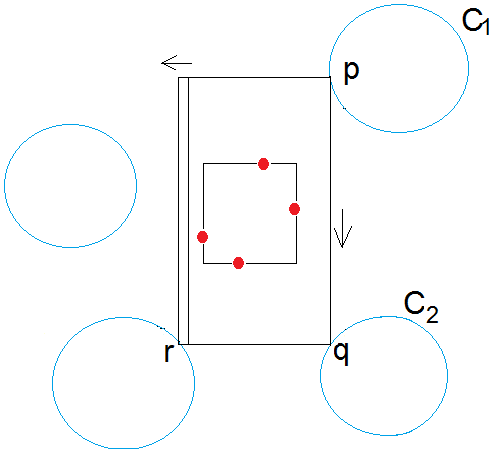}
	\caption{Case 4.1
}
	\label{fig:circle-case4-1}
\end{figure}

\textbf{Case 4.1}. While extending the CSR in the two directions away from $p$, the CSR touches a circle in $B_{SE}$ or $B_{NW}$ at some point $q$ before touching any circle in $B_{SW}$ (Figure \ref{fig:circle-case4-1}).
The other two corners are determined by sliding the edge opposite to $pq$ outwards until it touches a circle at some point $r$.
In this case, the CSR is uniquely defined.


\textbf{Case 4.2}. While extending the CSR in the two directions away from $p$, the first circle that CSR touches, at a point $q$, is located in $B_{SW}$. 
This gives us an infinite number of CSRs.

\subsection{Dominating envelopes}
\label{AAR-C-dominating-envelopes}

\begin{definition}
For each quadrant $B_i$, the \textbf{dominating envelope} of $B_i$ is a curve $\mathcal{C}$ with the following properties:

1) $\mathcal{C}$ lies inside $B_i$;

2) for any point $p \in \mathcal{C}$, the rectangle cornered at $p$ and the closest corner of $S_{min}$ from $p$ is empty;

3) extending $\mathcal{C}$ away from $S_{min}$ violates property (2).
\end{definition}

\begin{figure}[htp]
	\centering
	\includegraphics[scale=0.6]{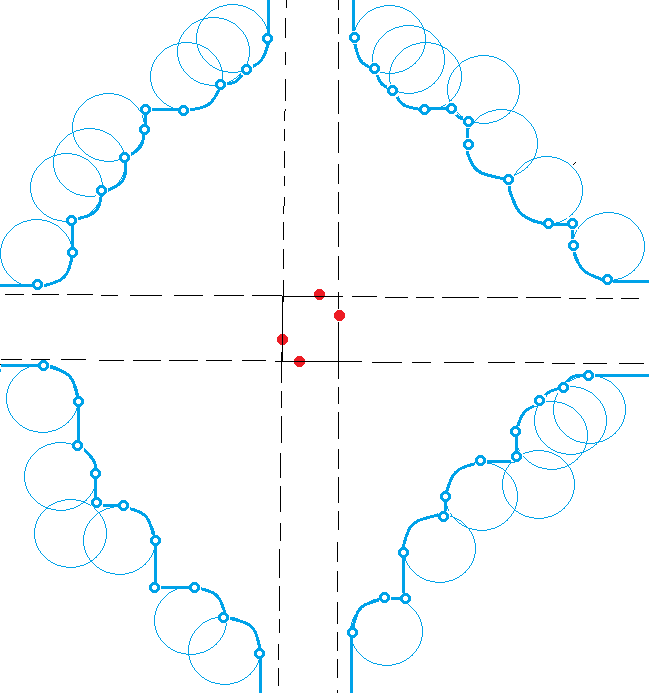}
	\caption{In each region, the circles define a dominating envelope $\mathcal{C}$, which is a sequence of arcs and horizontal or vertical segments. 
Two consectutive arcs or segments define a breakpoint on $\mathcal{C}$.}
	\label{fig:dominating-envelope-1}
\end{figure}

Note that $\mathcal{C}$ is a sequence of circle arcs, horizontal, and vertical segments, plus a horizontal and a vertical infinite ray (see Figure \ref{fig:dominating-envelope-1}).
Its use will be revealed later on.

The dominating envelope changes direction at \textit{breakpoints}, which can be between two consecutive arcs, segments, or infinite ray.
Every two consecutive breakpoints define a \textit{range of motion} for a CSR corner.

A breakpoint $p$ is said to be a \textit{corner breakpoint}, if a CSR cornred at $p$ cannot be extended away from $S_{min}$ in all directions without crossing some circle, even if its other corners are not located on any envelope.

To compute the dominating envelope $\mathcal{C}$ of $B_{NE}$, we do the following.
First, sort the circles by X coordinate of their centers.
Let $p$ be the current breakpoint (initially, the first breakpoint is the left endpoint $l$ of the left-most circle, with a vertical infinite ray upwards from $l$).
For each two adjacent circles $C_1(c_1, 1), C_2(c_2, 1)$, depending on the relative positions of $c_1, c_2$, we do the following.

\begin{figure}[htp]
	\centering
	\includegraphics[scale=0.4]{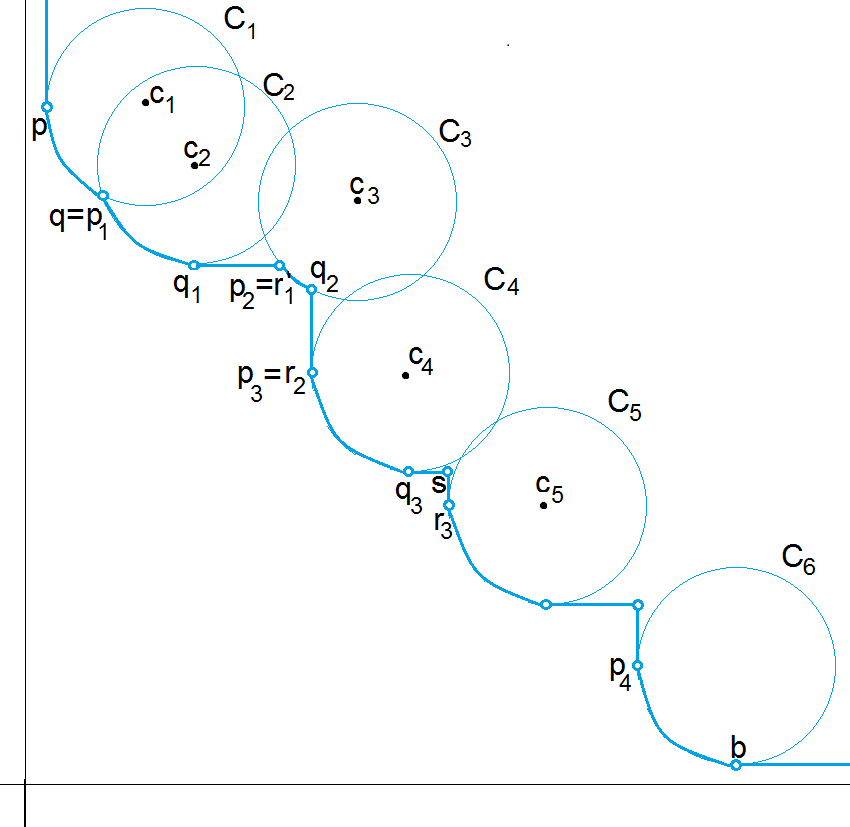}
	\caption{
	Each pair of adjacent circles gives a different case.
}
	\label{fig:dominating-envelope-2}
\end{figure}

\textbf{Case A}. $c_2 \in C_1$.
In this case, we add the lower intersection between $C_1$ and $C_2$ as a new corner breakpoint $q$, along with the arc $pq$ of $C_1$.

\textbf{Case B}. $y(c_1) - 1 \leq y(c_2) \leq y(c_1)$ and $x(c_2) > x(c_1) + 1$.
We add a breakpoint $q$ at the bottom of $C_1$, the arc $pq$ of $C_1$, a corner breakpoint $r$ at the intersection between the horizontal through $q$ and $C_2$, and the line segment $qr$.

\textbf{Case C}. $x(c_1) + 1 \leq x(c_2) \leq x(c_1) + 1$ and $y(c_2) < y(c_1) - 1$.
We add a breakpoint $r$ at the left endpoint of $C_2$, a corner breakpoint $q$ at the intersection between the vertical through $r$ and $C_1$, the line segment $qr$, and the arc $pq$ of $C_1$.

\textbf{Case D}. $x(c_2) > x(c_1) + 1$ and $y(c_2) < y(c_1) - 1$.
We add the breakpoints $q$ at the bottom of $C_1$, $r$ at the left end of $C_2$, the \textit{corner breakpoint} $s$ at the intersection between the horizontal through $q$ and the vertical through $r$, the arc $pq$ of $C_1$, and the segments $qs$ and $sr$.
%
Figure \ref{fig:dominating-envelope-2} illustrates this process.

Note that deciding the case a circle belongs to can be done in $O(1)$ time per circle.

\subsection{Finding an optimal solution in each case}
\label{MBSR-C-find-each-case}

To find an optimal solution in each case, we do the following.

First, slide the edges of $S_{min}$ outward until each of them touches a circle. 
If this is not possible, then the solution is unbounded, so we will assume that each edge will eventually hit a circle.
Denote the resulting rectangle by $S_{max}$ and discard the portions outside $S_{max}$ from the dominating envelopes of all quadrants.
The endpoints of the resulting envelopes are also counted as breakpoints.

For each case described in Section \ref{MBSR-C-cases}, we give a different algorithm to compute the largest separating rectangle.
Before considering any case, we sort the blue circles by the X coordinate and then (to break ties) by the Y coordinate of their centers.

\textbf{Case 1}.

We consider all corner breakpoints that are defined by pairs of adjacent circles in Case D, and add them to a set $B'$.
We also consider all arcs $p q$ of circles that are part of pairs in Case D ($p$ to the west of $q$), the vertical line $l_p$ through $p$ and the horizontal line $l_q$ through $q$,
and add $l_p \cap l_q$ to $B'$.
We then find the largest rectangle $S^*$ enclosing $R$ and containing the fewest points in $B'$ using the algorithm in \cite{Armaselu-arXiv} in $O(m + n)$ time.
It is easy to check that $S^*$ is an optimal solution for Case 1, since any circle containing points in $B'$ intersects a circle in $B$.
Thus, Case 1 can be done in $O(m + n)$ time.

\textbf{Case 2}. 

Assume wlog that two circles pin the north and the west edges of a CSR.
The cases where the two circles define a different pair of adjacent edges of a CSR can be handled in a similar fashion.

If we are in Case 2.1, we consider all corner breakpoints $q$ defined by pairs of adjacent circles in Case D, 
as well as intersection between the south horizontal tangent $t_H$ to the eastmost circle in $B_{NE}$ and the east vertical tangent $t_V$ to the northmost circle in $B_{NW}$ south of $t_H$.
For every such point, the north and west edges are fixed, 
and we either find the south edge by extending the west edge southwards until it hits a blue circle, 
or the east edge by extending the north edge eastwards until it hits a blue circle.
In both approaches, the fourth edge is uniquely determined.
For each circle in $C \in B_{NE}$, we store pointers to the northmost circle in $B_{NW}$ south of $C$ and to the eastmost circle in $B_{SE}$ west of $C$, as well as similar pointers for the other quadrants and directions,
Thus, once the first two edges are fixed, we can find the third and fouth edges in $O(1)$ time. 
Since there are $O(m)$ circles in Case 2.1 and they all can be found in $O(m)$ time, Case 2.1 can be solved in $O(m)$ time.

For Case 2.2, we consider all points $q$ as in Case 2.1. 
Having selected such point $q$ defined by two circles $C_{1} \in B_{NE} \cup B_{NW}$ and $C_{2} \in B_{NW} \cup B_{SW}$, 
we consider the dominating envelope of $B_{SE}$ starting from the east tangent to $C_1$ or the east edge of $S_{min}$, whichever is eastmost, and ending at the south tangent to $C_2$ or the south edge of $S_{min}$, whichever is southmost.
This gives us a range of motion for the $SE$ corner $r$ of the CSR spanning $O(m)$ circle arcs.
For each such arc, we find the optimal CSR in $O(1)$ time as we shall prove in the next section.
Since there are $O(m)$ choices of $q$, we therefore handle Case 2.2 in $O(m^2)$ time.

As for Case 2.3, note that the pairs of circles defining the $NW$ and the $SE$ corners, respectively, must belong to a dominating envelope.
We scan the dominating envelope of $B_{NW}$ for pairs of circles $C_1, C_2$ in Cases B and C and, for each such pair, 
we scan the dominating envelope of $B_{SE}$ for pairs of circles in Cases B and C,
starting from the east tangent to $C_1$ or the east edge of $S_{min}$, whichever is eastmost, and ending at the south tangent to $C_2$ or the south edge of $S_{min}$, whichever is southmost.
Once these pairs are established, the CSR is determined.
Since scanning each dominating envelope takes $O(m)$ time, we handle Case 2.3 in $O(m^2)$ time.

\textbf{Case 3}.

Assume wlog that a circle $C$ pins the east edge of the CSR.
The cases where the circle define a different edge of the CSR can be handled in a similar fashion.

For Case 3.1, we scan the dominating envelope of $B_{NE}$ (similarly, $B_{SE}$) for pairs of circles $(C_{1}, C)$ in cases C and D, ($C$ is the rightmost circle of the pair).
For every such pair of circles in $B_{NE}$, we consider all circles $C_{2} \in B_{SE}$ that are intersected by the west vertical tangent to $C$.
It is possible that some of these circles were already considered for a previous pair of circles in $B_{NE}$, so we may have to consider $O(m^2)$ triplets of circles $(C, C_{1}, C_{2})$.
We also traverse the circles in $B_E$ in increasing X order of their centers.
Denote by $C$ the current circle.
We consider the sequences of circles $C_{NE} \in B_{NE}$ and $C_{2} \in B_{SE}$ that are intersected by the west tangent to $C$.
Since these sequences may include circles already considered for a previous circle in $B_E$, 
we may need to spend $O(m^2)$ to find all such triplets $(C, C_1, C_{2})$ for which there exists a vertical line intersecting both $C_{1}, C_{2}$.
Once a triplet is established, the west, north, and south egdes are established, and the west edge can be determined in $O(1)$ time by extending the north or the south edge until it hits a circle.
Thus, Case 3.1 requires $O(m^2)$ time.

For Case 3.2, we scan the dominating envelope of $B_{SE}$ (similarly, $B_{NE}$) for pairs of circles $(C_{1}, C)$ in cases C and D ($C$ is the rightmost circle of the pair).
We also traverse the circles $C \in B_E$ in increasing X order, and consider the sequences of circles $C_{1} \in B_{SE}$ that are intersected by the west tangent to $C$.
For each pair $(C_{1}, C)$, the east and south edges of the CSR are defined.
This provides a range of eligible circles from the dominating envelope of $B_{NW}$ such that the $SW$ and $NE$ corners of the CSR are not supported by any circle, and the $NW$ corner slides along some circle arc.
There are $O(m)$ $(C_{1}, C)$ pairs and each of them gives $O(m)$ circles from $B_{NW}$.
Hence, Case 3.2 takes $O(m^2)$ time.

\textbf{Case 4}. 

Consider all arcs defined by pairs of adjacent circles in one of the cases A, B, C, or D.
Consider all arcs defined by pairs of adjacent circles.
Each such arc $a$ establishes the range of motion for the appropriate corner $p$ of a CSR, say $[p_{start}, p_{end}]$ in X order.
Suppose $a$ belongs to a circle in $B_{NE}$, which establishes the range of motion of the NE corner $p$ of the CSR.
This gives us a range of sliding motion for the north and the east edges of the CSR, which are supported by two rays $r_W, r_S$ shooting from $p$ to the west and south, respectively.
Since only the SW corner $q$ may also slide along a circle, the west edge can be neither to the west of the first intersection $W(p)$ between $r_W$ and a circle, nor to the west of the easternmost point in $B_W$ of a circle.
Similarly, the south edge can be neither be to the south of the first intersection $S(p)$ between $r_S$ and a circle, nor to the south of the northernmost point in $B_S$ of a circle.
This gives a range of motion for $q$, which may span multiple arcs of circle with X coordinates within the range 
$arcs(p_{start}, p_{end})$ = $[\min(X(W(S(p_{end})))$, $X(S(W(p_{end}))))$, 
$\max(X(W(S(p_{start})))$, $X(S(W(p_{start}))))]$.
In fact, there are $O(m)$ arcs in the worst case, yielding $O(m^2)$ pairs of arcs for all possible pairs $(p, q)$.
By computing the pointers $west(p), south(p), east(p)$, and $north(p)$ for every $p$, from the dominating envelope, we can find each pair of arcs in $O(1)$ time, as we take them in X order.
That is, we consider all arcs $[p_{start}, p_{end}]$ defined by pairs of adjacent circles in X order and, for each such arc, 
we compute the points $W(S(p_{start}))$, $S(W(p_{start}))$, $W(S(p_{end}))$, and $S(W(p_{end}))$, 
and then consider the arcs in $B_{SW}$ with X coordinates within $arcs(p_{start}, p_{end})$.

In the next subsection, we also show how to handle each pair of arcs in Case 4 in $O(1)$ time, in order to find the optimal solution in $O(m^2)$ time.

\subsection{Finding the CSR once the arcs pinning its corners are selected}
\label{AAR-C-fixed-corners-circles}

\begin{figure}[htp]
	\centering
	\includegraphics[scale=0.3]{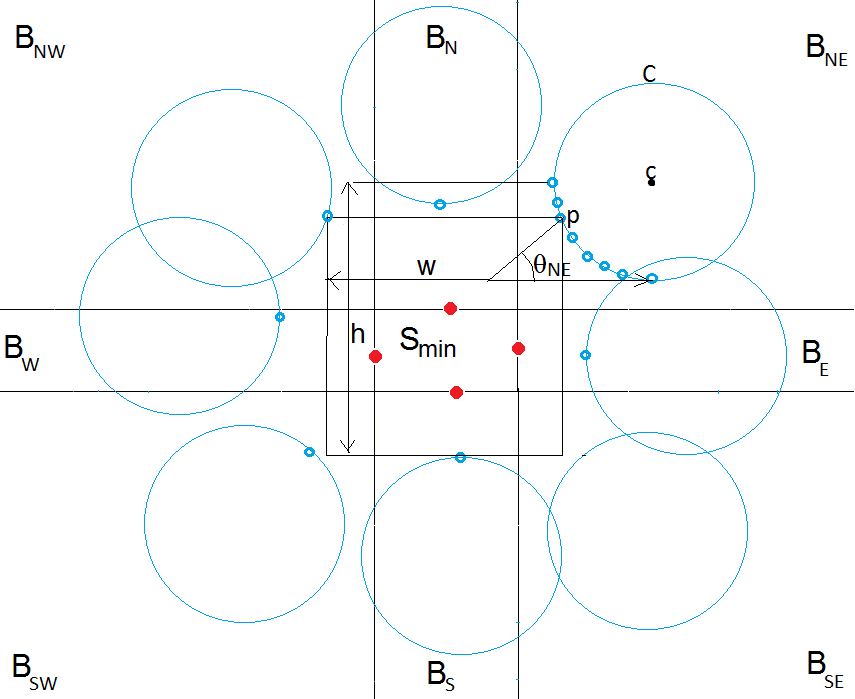}
	\caption{The function $f_{NE}(\theta_{NE})$ denoting the area of the CSR cornered on circle $C(c, 1)$}
	\label{fig:circles-approx}
\end{figure}

We show how to find a maximum separating rectangle once the arcs pinning the corners are selected.

For each quadrant $Q$, let $\theta_Q \in [\alpha_Q, \beta_Q]$ be angular position of the corner within the arc belonging to $Q$, say $C(c, 1)$.
Let $f(\theta_{NE}, \theta_{NW}, \theta_{SW}, \theta_{SE})$ denote the area of the CSR with the corners defined in terms of $theta_Q$ as above.
Our goal is to find the maximum of $f$ over the feasible set of arguments, along with its arguments.
First, assume $\theta_{SE}, \theta_{NW}, \theta_{SW}$ are fixed, with the left and bottom supports denoted as $l, b$ (Figure \ref{fig:circles-approx}).
We refer to $f(\theta_{NE}, \theta_{NW}, \theta_{SW}, \theta_{SE})$ as simply $f_{NE}(\theta)$ (that is, refer to $\theta_{NE}$ as simply $\theta$).

We have the following lemma.

\begin{lemma}
\label{lemma:2.1}
$f_{NE}$ has at most 3 maxima.
\end{lemma}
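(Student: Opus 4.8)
The plan is to write the area of the CSR as an explicit trigonometric function of the single free parameter $\theta=\theta_{NE}$ and then bound the number of its local maxima. Since $\theta_{SE},\theta_{NW},\theta_{SW}$ are fixed, the left support $l$ and the bottom support $b$ are constants, while the sliding corner traces the arc of $C(c,1)$, so its coordinates are $(c_x+\cos\theta,\,c_y+\sin\theta)$. Writing $A=c_x-l$ and $B=c_y-b$, the width and height of the rectangle are $A+\cos\theta$ and $B+\sin\theta$ (both positive on the feasible arc), whence
$$f_{NE}(\theta)=(A+\cos\theta)(B+\sin\theta).$$

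First I would differentiate and simplify using $\cos^2\theta-\sin^2\theta=\cos 2\theta$, obtaining
$$f_{NE}'(\theta)=A\cos\theta-B\sin\theta+\cos 2\theta.$$
This is a trigonometric polynomial of degree $2$, so it has at most $2\cdot 2=4$ zeros in one period; equivalently, the Weierstrass substitution $u=\tan(\theta/2)$ converts $f_{NE}'(\theta)=0$ into the quartic
$$(1-A)u^4-2Bu^3-6u^2-2Bu+(1+A)=0,$$
which has at most four real roots. Hence $f_{NE}$ has at most four critical points on the arc $[\alpha,\beta]$.

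Next I would count local maxima, remembering that the two endpoints $\theta=\alpha$ and $\theta=\beta$ of the feasible arc are themselves candidate maximizers. Interior critical points of a smooth function alternate between maxima and minima, so among at most four interior critical points at most two are maxima; the alternation then pins down the status of the endpoints. A short case analysis on the type-pattern of the interior critical points (for instance, the pattern max, min, max, min leaves the right endpoint a local maximum and the left endpoint a minimum, whereas the pattern min, max, min admits \emph{both} endpoints as maxima) shows that the total number of local maxima, interior plus boundary, never exceeds three. Taking the largest such configuration yields the claimed bound.

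The main obstacle is the careful bookkeeping of the boundary maxima: the naive count of interior critical points gives only two maxima, and the third arises solely from an endpoint of the restricted arc. The argument must therefore track how the alternation of interior extrema interacts with the ends of $[\alpha,\beta]$ rather than treating the circle as closed. Establishing that $f_{NE}'$ has at most four zeros is routine (degree count or quartic), so the delicacy lies entirely in the boundary analysis that promotes the bound from two to a tight three.
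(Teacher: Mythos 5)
Your proof is correct, and it follows the same overall strategy as the paper: write the area as a product of terms linear in $\sin\theta$ and $\cos\theta$, differentiate, and convert the critical-point equation into a quartic via a rational substitution. The difference is in the endgame. The paper substitutes $x=\tan\theta$ and observes that the resulting quartic factors exactly as $\bigl((w^2-1)x^2-2whx+h^2-1\bigr)(x^2-1)=0$, then discards negative roots (since $\tan\theta\ge 0$ on the relevant arc) to get at most three critical points, and reads off the bound of three maxima directly; this also hands the algorithm explicit closed-form candidate angles ($\theta=\pi/4$ and $\arctan x_{2,3}$) to evaluate in $O(1)$ time. You instead use the Weierstrass substitution $u=\tan(\theta/2)$, settle for the generic bound of four real roots of the quartic (equivalently, four zeros per period of a degree-$2$ trigonometric polynomial), and then recover the bound of three by the alternation argument combined with careful bookkeeping of the endpoints $\alpha,\beta$ of the feasible arc. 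Your counting is sound (with $M$ local maxima of which $e\le 2$ are endpoints, one needs at least $2M-e-1\ge 2M-3$ critical points, so $M\le 3$), and it is arguably more careful than the paper's own step from ``three extrema'' to ``three maxima''; it is also more robust, not depending on the fortuitous factorization. What it gives up is the explicit root formulas, though solving the quartic numerically or symbolically still takes $O(1)$ time, so the algorithmic consequence (Theorem \ref{theorem:2.2}) is unaffected.
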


We have 
\begin{equation}
f_{NE}(\theta) = (w - \sin \theta) \cdot (h - \cos \theta), 
\end{equation}
where $w = x(c) - x(l)$ and $h = y(c) - y(b)$. 
For simplicity, assume that all circles are fully contained in some quadrant, so $w, h > 1$.
Also, 
\begin{equation}
f_{NE}'(\theta) = w \sin \theta - h \cos \theta + \sin^2 \theta - cos^2 \theta.
\end{equation}
Letting $x = \tan{\theta}$, we get 
\begin{equation}
f_{NE}'(x) = \frac{w x - h}{\sqrt{1+x^2}} + \frac{x^2-1}{1+x^2},
\end{equation}
so s$f_{NE}'(x) = 0 \iff (w x - h)\sqrt{1+x^2} = 1 - x^2 \iff
(w x - h)^2 (1+x^2) = (1 - x^2)^2 \iff
(w^2 - 1)x^4 - 2whx^3 + (w^2 + h^2 + 2)x^2 - 2whx + h^2 - 1 = 0 \iff \\
(w^2 - 1)x^2(x^2 - 1) - 2whx(x^2 - 1) + (h^2 - 1)(x^2 - 1) = 0 \iff \\
((w^2 - 1)x^2 - 2whx + h^2 - 1)(x^2 - 1) = 0$,
which solves to 
\begin{equation}
x_1 = 1, 
\end{equation}

\begin{equation}
x_{2,3} = \frac{wh +- \sqrt{2(w^2+h^2) - 1}}{w^2-1}
\end{equation}
 (we ignore negative roots since $x \geq 0$).
Since $x = \tan \theta$, it follows that $\theta_1 = \frac{\pi}{4}, \theta_{2,3} = \arctan{x_{2,3}}$ are extrema for $f_{NE}$.
This means there are at most three maxima for $f_{NE}$.

By a similar argument, there are at most three maxima for $f_{NW}, f_{SW}, f_{SE}$.
Note that the position of two opposite corners of a CSR, say $NE$ and $SW$, determine the position of the other two corners.
This gives us at most $M = 9$ maxima for $f$, as there are only two variables.
We compute the $O(1)$ maxima of $f$ and choose the one that gives the largest CSR.

Thus, we have proved the following result.

\begin{theorem}
\label{theorem:2.2}
Given a set of red points $R$ with $|R| = n$, and a set of blue unit circles $B$ with $|B| = m$, 
the largest rectangle enclosing $R$ and avoiding all circles in $B$ can be found in $O(m^2 + n)$ time.
\end{theorem}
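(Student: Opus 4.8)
The plan is to show that the largest separating rectangle must be a \emph{candidate separating rectangle} (CSR), to enumerate an $O(m^2)$-size family of candidate configurations that is guaranteed to contain the optimum, and to evaluate the best rectangle arising from each configuration in constant amortized time. First I would observe that any optimal rectangle $S^*$ is necessarily maximal: if $S^*$ could be extended outward in some direction without intersecting a circle, the extended rectangle would still enclose $R$ and avoid every circle while having strictly larger area, contradicting optimality. Hence $S^*$ is a CSR, and it suffices to search over CSRs.

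Next I would set up the preprocessing. Compute $S_{min}$ in $O(n)$ time, discard the circles meeting $S_{min}$, and slide the four edges of $S_{min}$ outward to obtain $S_{max}$, the bounded region inside which the optimum must lie (unbounded solutions are ruled out by hypothesis). After sorting the circles by the $x$-coordinate of their centers, build the four dominating envelopes, one per quadrant, using the case analysis A--D of Section \ref{AAR-C-dominating-envelopes}; this costs $O(m \log m)$, which is absorbed into $O(m^2)$. The envelopes are exactly the loci along which a sliding corner of a CSR can travel while keeping the diagonally opposite quadrant of $S_{min}$ empty, so every pinning or sliding support of a CSR lives on an envelope, and this is what keeps the enumeration polynomial.

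The core of the argument is a complete case analysis on the number of edges of the CSR pinned by circles, namely three, two, one, or zero pinned edges (Section \ref{MBSR-C-cases}); the seemingly missing four-pinned case is subsumed by Case 1, where the fourth edge is extended until it too becomes pinned. For each case I would invoke the corresponding enumeration of Section \ref{MBSR-C-find-each-case}: Case 1 reduces to a fewest-points separating-rectangle instance solved in $O(m+n)$ via \cite{Armaselu-arXiv}; the rigidly determined Cases 2.1 and 2.3 are handled by scanning the envelopes and following precomputed quadrant pointers in $O(m)$ and $O(m^2)$ time; and Cases 2.2, 3.1, 3.2, and 4, which admit a continuum of CSRs because one or more corners slide along arcs, are each reduced to examining $O(m^2)$ pairs of arcs. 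The key to staying within $O(m^2)$ here is that the precomputed pointers $west, south, east, north$ derived from the dominating envelopes let us read off, in $O(1)$ amortized time per candidate, the range $arcs(p_{start}, p_{end})$ of arcs on which the second sliding corner may lie, so no arc pair is examined more than a constant number of times.

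The main obstacle --- and the reason Lemma \ref{lemma:2.1} is needed --- is optimizing over the continuous degrees of freedom within a single fixed configuration of pinning arcs. Once the arcs carrying the sliding corners are fixed, the area becomes a function $f(\theta_{NE}, \theta_{NW}, \theta_{SW}, \theta_{SE})$ of the angular positions, and because two opposite corners determine the rectangle this is effectively a function of two angular parameters. I would hold three angles fixed and show, as in Lemma \ref{lemma:2.1}, that $f_{NE}(\theta) = (w - \sin\theta)(h - \cos\theta)$ has at most three stationary points, obtained by clearing radicals in $f_{NE}'(\theta)=0$ and factoring the resulting quartic in $x = \tan\theta$; by symmetry the same bound holds for the other three quadrants, yielding at most $M = 9$ joint maxima. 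Evaluating these $O(1)$ critical rectangles per configuration and keeping the best handles each arc pair in $O(1)$ time. Combining the per-case bounds, every case costs $O(m^2 + n)$, and taking the largest CSR over all cases gives the claimed $O(m^2 + n)$ total running time.
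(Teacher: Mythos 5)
Your proposal is correct and takes essentially the same approach as the paper: it reproduces the paper's reduction to candidate separating rectangles, the dominating-envelope construction, the case analysis on the number of pinned edges with the same per-case bounds, and the use of Lemma \ref{lemma:2.1} to handle each fixed configuration of arcs in $O(1)$ time. The only additions are minor glosses (e.g., noting that a four-pinned-edge configuration is subsumed by Case 1), so no further comparison is needed.
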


\section*{Acknowledgement}
\label{ack}

The author would like to thank Drs. Benjamin Raichel, Chenglin Fan, and Ovidiu Daescu for the useful discussions.

\section{Conclusions and Future Work}
\label{Conclusions and Future Work}

We consider the outlier version of the largest axis-aligned separating rectangle (MBSR-O),
for which we give an $O(k^3 m + m \log m + n)$ time algorithm.
We also study the problem of finding the largest axis-aligned separating rectangle among unit circles (MBSR-C) and give an $O(m^2 + n)$ time algorithm.

We leave for future consideration finding the largest, as well as the smallest enclosing circle avoiding all blue circles.
A "combined" version such as MBSR-C with outliers, i.e., finding the largest rectangle enclosing all red points while containing at most $k$ circles, would also be of interest.
Finally, it would be interesting to either further improve the time bounds for the MBSR-O and MBSR-C, prove lower bounds, or come up with approximation algorithms.


\small
\bibliographystyle{abbrv}

\end{document}